\begin{document}

\newtheorem{theorem}{Theorem}[section]
\newtheorem{definition}{Definition}[section]
\newtheorem{corollary}[theorem]{Corollary}
\newtheorem{lemma}[theorem]{Lemma}
\newtheorem{proposition}[theorem]{Proposition}
\newtheorem{step}[theorem]{Step}
\newtheorem{example}[theorem]{Example}
\newtheorem{remark}[theorem]{Remark}

\font\sixbb=msbm6
%\font\sevenbb=msbm7
\font\eightbb=msbm8
%\font\ninebb=msbm9
%\font\tenbb=msbm10
\font\twelvebb=msbm10 scaled 1095
%\font\thirteenbb=msbm10 scaled 1315
%\font\fourteenbb=msbm10 scaled \magstep2
%%%%%%%%%%%%
\newfam\bbfam
\textfont\bbfam=\twelvebb \scriptfont\bbfam=\eightbb
                           \scriptscriptfont\bbfam=\sixbb
%\textfont\bbfam=\thirteenbb \scriptfont\bbfam=\eightbb
%                            \scriptscriptfont\bbfam=\sixbb

\newcommand{\tr}{{\rm tr \,}}
\newcommand{\linspan}{{\rm span\,}}
\newcommand{\rank}{{\rm rank\,}}
\newcommand{\diag}{{\rm Diag\,}}
\newcommand{\Image}{{\rm Im\,}}
\newcommand{\Ker}{{\rm Ker\,}}

\def\bb{\fam\bbfam\twelvebb}
\newcommand{\enp}{\begin{flushright} $\Box$ \end{flushright}}
\def\cD{{\mathcal{D}}}
\def\N{\bb N}
\def\R{\bb R}
\def\Z{\bb Z}
\def\C{\bb C}

\title{Continuous coexistency preservers on effect algebras 
\thanks{The first author was supported by Leading Graduate Course for Frontiers of Mathematical Sciences and Physics (FMSP), and JSPS KAKENHI 19J14689, Japan.}
\thanks{The second author was supported by grants N1-0061, J1-8133, and P1-0288 from ARRS, Slovenia.}}
\author{Michiya Mori\footnote{Graduate School of Mathematical Sciences, The University of Tokyo, 3-8-1 Komaba Meguro-ku Tokyo, 153-8914, Japan, mmori@ms.u-tokyo.ac.jp} \ \, and \ \, 
Peter \v Semrl\footnote{Faculty of Mathematics and Physics, University of Ljubljana,
        Jadranska 19, SI-1000 Ljubljana, Slovenia; Institute of Mathematics, Physics, and Mechanics, Jadranska 19, SI-1000 Ljubljana, Slovenia, peter.semrl@fmf.uni-lj.si}
        }

\date{}
\maketitle

\begin{abstract}
Let $H$ be a finite-dimensional Hilbert space, $\dim H \ge 2$. We prove that every continuous coexistency preserving map on the effect algebra $E(H)$ is either a standard automorphism
of $E(H)$, or a standard automorphism of $E(H)$ composed with the orthocomplementation. We present examples showing the optimality of the result.
\end{abstract}
\maketitle

\bigskip
\noindent AMS classification: 47B49, 81R15.

\bigskip
\noindent
Keywords: Hilbert space effect algebra, unsharp quantum measurement, coexistency, automorphism.

%-------------------------------------------------------------------------------------------------------

\section{Introduction and statement of the main result}

One of the main objectives of quantum mechanics is  the study of measurement.
In the  mathematical formulation of quantum mechanics an observable is represented by a self-adjoint operator. 
However, such a representation implicitly assumes that measurements are perfectly accurate, which cannot be true.
This led G\"unther Ludwig to introduce an alternative axiomatic formulation of quantum mechanics, see \cite{Kraus, LudI, LudII}.
The main difference compared to the classical approach is that quantum events are not sharp, and
therefore, according to Ludwig, a quantum event is not necessarily a projection, but rather a self-adjoint operator whose spectrum lies in $[0,1]$.
Such operators are called effects and the set of all effects is called the effect algebra. 
In this theory one of the most important relations is coexistency. Ludwig defined two effects to be coexistent if they can be measured together by applying a suitable apparatus.
For more details we refer to \cite{GeS} and the references therein.

In the language of mathematics (see \cite{LudI}), effects, the effect algebra, and the relation of coexistency are defined in the following way. 
Let $H$ be a Hilbert space. An effect is a bounded linear self-adjoint operator $A : H \to H$ satisfying $0 \le A \le I$. The set of all effects
will be denoted by $E(H)$. We further denote by $P(H) \subset E(H)$ the set of all projections (bounded linear self-adjoint idempotent operators on $H$),
by $P^1 (H) \subset P(H)$ the set of all projections of rank one, and by ${\rm Sca}\, (H)$ 
the set of all scalar effects,
${\rm Sca}\, (H) = \{ tI \, : \, t \in [0,1] \}$.
For each $A \in E(H)$ its orthocomplement  is defined by $A^\perp = I - A$.
In the case when $H$ is finite-dimensional, $\dim H = n$, we identify bounded self-adjoint operators on $H$ with hermitian $n\times n$ matrices. With this identification we
have $E(H) = E_n$, $P(H) = P_n$, and $P^1 (H) = P_{n}^1$, where $E_n$ is the set of all $n\times n$ hermitian matrices whose spectra belong to the unit interval $[0,1]$, $P_n$
is the set of all $n \times n$ hermitian matrices $P$ satisfying $P^2 = P$, and $P_{n}^1$ is the set of all $n\times n$ hermitian rank one matrices $P$ satisfying $P^2 = P$.

Two effects $A,B \in E(H)$ are said to be coexistent, $A \sim B$, if there exist effects $E,F,G \in E(H)$ such that
$$
A = E + G  \ \ \ {\rm and} \ \ \ B = F + G \ \ \ {\rm and} \ \ \ E + F + G \in E(H).
$$
We say that a map $\phi : E(H) \to E(H)$ preserves coexistency if for every pair $A,B \in E(H)$ we have
\begin{equation}\label{ccc}
A \sim B \iff \phi (A) \sim \phi (B).
\end{equation}
The relation of coexistence is very poorly understood even for qubit effects, that is, elements of $E_2$. An attempt of the description of coexistence on $E_2$ can be found 
in \cite{BuS} but there are no similar results in higher dimensions.

In mathematical foundations of quantum mechanics symmetries play a very important role. These are bijective maps on various quantum structures that preserve certain
relations and/or operations that are relevant in mathematical physics. We refer to \cite{GeS,Mol0,Se0,Se1,Se2} and the references therein for more information on symmetries
of effect algebras. It turns out that quite often these symmetries are standard automorphisms, that is, maps of the form $A \mapsto UAU^\ast$, $A\in E(H)$,
where $U : H \to H$ is a unitary or antiunitary operator. 

One of the most challenging problem in the theory of symmetries of effect algebras, that is, the problem of describing the general form of bijective coexistency preservers
on $E(H)$, has been resolved only very recently. It was proved by Geher and \v Semrl  \cite{GeS} that for every bijective map $\phi : E(H) \to E(H)$ satisfying (\ref{ccc})
there exists a unitary or antiunitary operator $U: H\to H$ and a bijective map $g : [0,1] \to [0,1]$ such that 
$$
\{\phi(A), \phi(A^\perp)\} = \{UAU^\ast, UA^\perp U^\ast\} 
$$	
for every $A\in E(H) \setminus {\rm Sca}\, (H)$, 
and $$
\phi (tI) = g(t)I 
$$
for every real $t \in [0,1]$. Conversely, every map of the above form preserves coexistency in both directions.
Of course, such maps are far from being continuous in general. Under the additional assumption of continuity we get a much nicer conclusion 
that every such map must be a standard automorphism of $E(H)$ or a standard automorphism composed with the orthocomplementation. More precisely,
for every bijective continuous map $\phi : E(H) \to E(H)$ satisfying (\ref{ccc})
there exists a unitary or antiunitary operator $U: H\to H$ such that either
$$
\phi (A) = UAU^\ast, \ \ \ A \in E(H),
$$
or
$$
\phi (A) = UA^\perp U^\ast, \ \ \ A \in E(H).
$$
As we have added the continuity assumption it is natural to ask if other assumptions in the result of Geher and \v Semrl
can be weakened. In particular, can we remove the bijectivity assumption and still get a nice
description of coexistency preservers under the additional assumption of continuity? The answer is negative. To see this
we assume that $H$ is an infinite-dimensional Hilbert space. Let further $T: H \to H$ be a bounded linear contraction, that is, $\| T \| \le 1$.
Then the operator given by  $\varphi_1 (A) = TAT^\ast$, $A\in E(H)$, maps $E(H)$ into itself. Indeed, for every $A\in E(H)$ the linear map 
$TAT^\ast$ is a bounded self-adjoint positive operator (throughout the paper a positive operator/matrix will mean a positive-semidefinite operator/matrix). It is also trivial to verify that $TAT^\ast \le I$. Let $\varphi_2 : E(H) \to S \subset E(H)$ be any
continuous map where $S = \{ A \in E(H) \, : \, \| A \| \le 1/2\}$. Finally, let $G_n$, $n=1,2,\ldots$, be pairwise commuting effects.
We denote by $S(H)$ the real linear space of all bounded self-adjoint operators on $H$. Let $f_n : S(H) \to \R$, $n=1,2,\ldots$, be any positive real-linear
functionals of norm one. We define $\varphi_3 : E(H) \to E(H)$ by $\varphi_3 (A) = \sum_{j=1}^\infty 2^{-j} f_j (A) G_j$. Since $H$ is
infinite-dimensional it can be identified with the direct orthogonal sum of four copies of $H$. Hence, each map from $E(H)$ into $E( H \oplus H \oplus H \oplus H)$
can be considered as a map from $E(H)$ into itself.

\begin{proposition}\label{fitne}
The map $\phi : E(H) \to E( H \oplus H \oplus H \oplus H)$ given by
$$
\phi (A) = \left[ \begin{matrix} {
A & 0 & 0 & 0 \cr 0 & \varphi_1 (A) & 0 & 0 \cr 0 & 0 & \varphi_2 (A) & 0 \cr 0 & 0 & 0 & \varphi_3 (A) \cr
}\end{matrix}\right] , \ \ \ A\in E(H),
$$
is continuous and satisfies (\ref{ccc}).
\end{proposition}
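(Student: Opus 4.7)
The plan is to verify continuity coordinate-wise and then to establish (\ref{ccc}) in each direction by working block by block. Continuity is immediate: the identity map, $\varphi_1$, and $\varphi_2$ are continuous by construction, and $\varphi_3$ is continuous because the series $\sum_{j=1}^\infty 2^{-j} f_j(\cdot) G_j$ converges uniformly in the operator norm (each $f_j$ has norm one and each $G_j$ satisfies $\|G_j\|\le 1$, so the partial sums are uniformly Cauchy).

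For the forward implication, assume $A\sim B$ with witnesses $E, F, G \in E(H)$, and exhibit a coexistency witness in each of the four blocks; their block-diagonal sum will witness $\phi(A)\sim \phi(B)$. In the first block there is nothing to do. In the second block, $(TET^\ast, TFT^\ast, TGT^\ast)$ works, because $\|T\|\le 1$ gives $T(E+F+G)T^\ast \le TT^\ast \le I$. In the third block the range of $\varphi_2$ lies in $S$, so any $C, D\in S$ satisfy $0\le C+D$ and $\|C+D\|\le 1$, whence $C+D\le I$ and $(C, D, 0)$ witnesses $C\sim D$. In the fourth block the range of $\varphi_3$ is contained in the commutative $C^\ast$-algebra generated by $\{G_j\}_{j\ge 1}$, so $\varphi_3(A)$ and $\varphi_3(B)$ commute; for any two commuting effects $C, D$ the triple $(C - CD,\; D - CD,\; CD)$ lies in $E(H)$ (treating $C, D$ as functions on their joint spectrum, $c + d - cd = 1 - (1-c)(1-d) \in [0,1]$) and witnesses $C\sim D$.

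For the reverse implication, assume $\phi(A)\sim \phi(B)$ with witnesses $\widetilde E, \widetilde F, \widetilde G \in E(H\oplus H\oplus H\oplus H)$, and let $E, F, G$ denote their top-left diagonal blocks. A diagonal block of a positive operator is positive, and a diagonal block of an operator bounded above by $I$ is itself bounded above by $I$, so $E, F, G\in E(H)$. Equating $(1,1)$-blocks in $\phi(A) = \widetilde E + \widetilde G$ and $\phi(B) = \widetilde F + \widetilde G$ yields $A = E + G$ and $B = F + G$, and the analogous equation gives $E + F + G \le I$. Thus $A\sim B$.

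No step here is genuinely deep; the only points worth unpacking carefully are the compression argument used for the reverse direction and the commuting-effects factorisation used for $\varphi_3$. The former reduces to a standard fact about positive block operators, and the latter to an identity in a commutative $C^\ast$-algebra, so the main (mild) obstacle is simply to be attentive in choosing the right witness triple in each of the four blocks and in extracting the $(1,1)$-compression in the converse direction.
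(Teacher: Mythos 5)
Your proof is correct and follows essentially the same route as the paper: the paper handles the converse and the block-by-block reduction by citing Lemma \ref{dirsum}, disposes of $\varphi_1$ via Lemma \ref{lem3}, of $\varphi_2$ by the definition, and of $\varphi_3$ via the fact that commuting effects coexist (Lemma \ref{properties}), whereas you simply unpack those citations into explicit witness triples and the $(1,1)$-compression argument. All of your witnesses check out, so this is just a self-contained rendering of the same argument.
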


It is obvious that $\phi (A) \in  E( H \oplus H \oplus H \oplus H)$ for every $A\in E(H)$ and that $\phi$ is continuous. We postpone the verification of (\ref{ccc}) to the next section.
The above example shows that there is no nice description of continuous coexistency preservers on $E(H)$. However, a careful reader has noticed that the assumption that
$H$ is infinite-dimensional was essential in constructing the above counterexample.
And in fact, as our main theorem shows, in the finite-dimensional case the answer to our question is in the affirmative.

\begin{theorem}\label{main}
Let $H$ be a Hilbert space, $2 \le \dim H < \infty$. Assume that $\phi : E(H) \to E(H)$ is a continuous map satisfying (\ref{ccc}).
Then there exists a unitary or antiunitary operator $U : H \to H$ such that either
$$
\phi (A) = UAU^\ast
$$
for every $A\in E(H)$, or
$$
\phi (A) = UA^\perp U^\ast
$$
for every $A \in E(H)$.
\end{theorem}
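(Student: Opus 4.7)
The plan is to reduce to a Wigner-type rigidity statement on rank-one projections and then extend to all of $E(H)$ by continuity, using the finite-dimensionality of $H$ exactly at the points where bijectivity was exploited in the Geher--\v Semrl theorem. After composing $\phi$ with orthocomplementation and with a standard automorphism if necessary, the target is to normalise $\phi$ to be the identity on $P^{1}(H)$ and then deduce $\phi={\rm id}$ on all of $E(H)$.

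I would first analyse $\phi$ on scalars and on projections. The subset ${\rm Sca}(H)$ is characterised inside $E(H)$ purely by the coexistency relation, as the effects coexistent with every other effect. Since $\phi$ preserves $\sim$ biconditionally, $\phi(tI)$ is at least coexistent with every element of the image $\phi(E(H))$; combining this with continuity of the arc $t\mapsto\phi(tI)$ and with the extremal position of $0$ and $I$ inside the coexistency graph should pin down $\phi(0),\phi(I)\in\{0,I\}$. Replacing $\phi$ by $A\mapsto\phi(A)^{\perp}$ if necessary, one may assume $\phi(0)=0$ and $\phi(I)=I$. Next, for a projection $P$ and effect $A$, the relation $A\sim P$ is equivalent to $AP=PA$, so the coexistency neighbourhood of a rank-$k$ projection $P$ is affinely isomorphic to $E_{k}\times E_{n-k}$. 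This product structure gives a coexistency-intrinsic invariant detecting the rank of a projection, which together with continuity yields $\phi(P^{1}(H))\subseteq P^{1}(H)$ (using the normalisation of the preceding paragraph to resolve the ambiguity between rank $1$ and rank $n-1$).

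Once $\phi(P^{1}(H))\subseteq P^{1}(H)$, two rank-one projections $P,Q$ satisfy $P\sim Q$ iff $P=Q$ or $PQ=0$, so $\phi|_{P^{1}(H)}$ is a continuous orthogonality preserver of the complex projective space $P^{1}(H)$. The continuous (non-bijective) form of Wigner's theorem then produces a unitary or antiunitary operator $U:H\to H$ with $\phi(P)=UPU^{\ast}$ for every $P\in P^{1}(H)$, and after replacing $\phi$ by $A\mapsto U^{\ast}\phi(A)U$ we may assume $\phi$ fixes every rank-one projection.

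It remains to show that the normalised $\phi$ is the identity on all of $E(H)$. For any $A\in E(H)$ and $P\in P^{1}(H)$, the equivalence $A\sim P\Leftrightarrow AP=PA$ means that $A$ is coexistent with exactly those rank-one projections whose ranges are eigenlines of $A$. Since $\phi$ fixes $P^{1}(H)$ and preserves $\sim$ biconditionally, $\phi(A)$ has the same eigenlines as $A$, so $\phi(A)$ and $A$ are simultaneously diagonalisable. Continuity of $\phi$, together with the normalisation on ${\rm Sca}(H)$ and an analogous coexistency argument applied to higher-rank projections (obtained by summing rank-one ones), pins down the eigenvalues of $\phi(A)$ and yields $\phi(A)=A$. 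The main obstacle is the rank-separation step in the second paragraph: producing a coexistency-only characterisation of rank-one projections that is preserved by a merely continuous, not necessarily bijective, map, and exploiting $\dim H<\infty$ in an essential way, since this is exactly where the infinite-dimensional counterexample of Proposition~\ref{fitne} must fail.
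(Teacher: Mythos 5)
Your outline follows the same broad arc as the paper (normalise, show rank-one projections go to rank-one projections, invoke a Wigner/Uhlhorn-type theorem, then extend by a commutation argument), but the three places where you wave your hands are exactly the three places where the real work lies, and as stated each one has a genuine gap. First, the ``coexistency-intrinsic invariant detecting the rank of a projection'' does not come for free from the product structure of $P^{\sim}=P^{c}$: since $\phi$ is not assumed surjective, you cannot transport any characterisation of $P^{\sim}$ that quantifies over all of $E(H)$. The paper's substitute is a purely topological dimension count: for $A\in E(p,q)$ the set $A^{\sim}$ contains a cell homeomorphic to $\mathbb{R}^{n^{2}-2pq}$ and contains no cell of larger dimension, and local injectivity of $\phi$ (from the fact that $\phi(A)=\phi(B)$ forces $A=B$ or $A=B^{\perp}$ or both scalar) plus invariance of domain lets one compare these dimensions in both directions to get $p'q'=pq$. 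That is the content of Step 3.2 and Lemmas 2.7--2.9, and your proposal contains no replacement for it. Second, there is no off-the-shelf ``continuous non-bijective form of Wigner's theorem'' for \emph{orthogonality} preservers: the non-surjective Wigner theorem concerns transition-probability preservers, whereas Uhlhorn's theorem (which is what you actually need, since $P\sim Q$ only gives orthogonality) requires bijectivity. The paper closes this by first proving that $\phi$ restricted to the compact boundaryless manifold $E(1,n-1)$ is a local homeomorphism (invariance of domain again), hence surjective, and only then applying Uhlhorn. You would need to reproduce that surjectivity argument; citing a non-bijective Wigner theorem does not do it.

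Third, and most seriously, your final step cannot work as described. Coexistency of $A$ with a projection $Q$ of any rank is equivalent to $AQ=QA$, so the family of projections coexistent with $A$ determines only the eigenspace decomposition of $A$, never its eigenvalues: every effect of the form $g(A)$ with $g$ injective on $\sigma(A)$ is coexistent with exactly the same projections as $A$. So ``an analogous coexistency argument applied to higher-rank projections pins down the eigenvalues'' is false. The paper pins down eigenvalues by using coexistency between \emph{non-projection} rank-one effects, via the criterion $rP\sim sQ\iff rP+sQ\le I$ for distinct rank-one projections $P,Q$, together with an equal-angle trick ($\tr(PR)=\tr(QR)$) to show the induced scalar functions $f_{P}$ are independent of $P$, and then extremal pairs $t(P+Q)\le I$ to force $f(t)=t$. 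Some argument of this kind, exploiting coexistency of genuinely unsharp effects, is unavoidable, and it is missing from your proposal.
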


In the language of linear algebra this statement reads as follows. Let $n \ge 2$ be an integer and $\phi : E_n \to E_n$ a continuous map. Assume that for every
pair of effects $A,B \in E_n$ we have $A \sim B \iff \phi (A) \sim \phi (B)$. Then there exists a unitary $n \times n$ matrix $U$ such that either $\phi (A) = UAU^\ast$ for all $A\in E_n$, or
$\phi (A) = UA^t U^\ast$ for all $A\in E_n$, or $\phi (A) = U(I-A)U^\ast$ for all $A\in E_n$, or $\phi (A) = U(I-A^t) U^\ast$ for all $A\in E_n$. Here, $A^t$ denotes the transpose of the
matrix $A$. 

Proposition \ref{fitne} shows that the assumption of finite-dimensionality is indispensable in our main result. In order to show that no further improvements are possible we need
to consider one more problem. 
We will say that a map $\phi : E(H) \to E(H)$ preserves coexistency \sl in one direction only \rm if for every pair $A,B \in E(H)$ we have
\begin{equation}\label{ccc1}
A \sim B \Rightarrow \phi (A) \sim \phi (B).
\end{equation}
The question is, of course, whether the assumption of preserving coexistency in our main theorem can be replaced by the weaker 
 assumption of preserving coexistency in one direction only and still get the same conclusion. We will see that the answer is negative even if we add the bijectivity assumption, thus confirming the
optimality of Theorem \ref{main}.

In fact, we will show even more. Recall first that the basic theorem concerning symmetries of effect algebras is Ludwig's description of ortho-order automorphisms. His result states that
every bijective map $\phi : E(H) \to E(H)$, $\dim H \ge 2$, such that for every pair $A,B \in E(H)$ we have
$$
A \le B \iff \phi (A) \le \phi (B)
$$
and 
$$
\phi (A^\perp ) = \phi (A)^\perp,
$$
is a standard automorphism of $E(H)$. In \cite{Se2} the optimality of Ludwig's theorem was studied. It was shown that there exists a 
 bijective map $\phi : E(H) \to E(H)$ such that for every pair $A,B \in E(H)$ we have (\ref{ccc1}),
\begin{equation}\label{mujko}
A \le B \Rightarrow \phi (A) \le \phi (B),
\end{equation}
and 
\begin{equation}\label{mujko2}
\phi (A^\perp ) = \phi (A)^\perp ,
\end{equation}
which is not a standard automorphism of $E(H)$. However, the example was non-continuous. Here, we will provide an example of a continuous map with all the above properties. Moreover, the
example is even simpler than the one presented in \cite{Se2}.

\begin{proposition}\label{micmo}
Take any continuous monotone increasing function $f : [0,1] \to [0,1]$ with $f(t) >0$ for all $t > 0$ and $f(1) = 1$.
The map $\phi : E_n \to E_n$ given by
$$
\phi (A) = f( {\rm tr}\, A) A \ \ \ {\rm if} \ \, 0 \le {\rm tr}\, A \le 1,
$$
(here, ${\rm tr}\, A$ denotes the trace of $A$)
$$
\phi (A) = A  \ \ \ {\rm if} \ \, 1 \le {\rm tr}\, A \le n-1,
$$
and
$$
\phi (A) = \phi ( A^\perp )^\perp  \ \ \ {\rm if} \ \, n-1 \le {\rm tr}\, A \le n,
$$
is bijective, continuous and satisfies (\ref{ccc1}), (\ref{mujko}), and (\ref{mujko2}).
\end{proposition}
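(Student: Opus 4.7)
My plan is to verify the five claimed properties of $\phi$ in turn, with coexistency preservation being by far the most delicate. First I would check well-definedness, continuity, and the orthocomplement identity. At $\tr A=1$ the lower formula $f(1)A$ matches $A$, and at $\tr A=n-1$ the upper formula evaluates to $(f(1)A^\perp)^\perp=A$, which agrees with the middle formula; continuity of each piece is immediate. The identity $\phi(A^\perp)=\phi(A)^\perp$ is built into the upper-region definition whenever $\tr A\in[0,1]\cup[n-1,n]$, and reduces to a tautology in the middle region where $\phi$ is the identity.

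For bijectivity I would use the trace as a ``tag''. A direct computation gives $\tr\phi(A)\in[0,1]$, $[1,n-1]$, $[n-1,n]$ in the three regions respectively; in the lower region $\tr\phi(A)=tf(t)$ with $t=\tr A$, and $t\mapsto tf(t)$ is continuous and strictly increasing on $[0,1]$ (using $f(t)>0$ for $t>0$ together with non-decreasing $f$), hence a bijection $[0,1]\to[0,1]$. Injectivity then follows because $\phi$ is invertible inside each region and the three regions are separated by $\tr\phi$. For surjectivity onto the lower-region image, given $C$ with $\tr C\in[0,1]$ pick the unique $t$ with $tf(t)=\tr C$ and set $A=C/f(t)$; one verifies $A\in E_n$ via the elementary bound $\|C\|\le\tr C$ valid for positive matrices. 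The upper region is obtained by orthocomplementation.

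Order preservation in one direction reduces to a case analysis over the nine possible pairs of regions. The key pointwise facts are $\phi(A)\le A$ when $\tr A\in[0,1]$ (because $f(\tr A)\le 1$), $\phi(A)=A$ in the middle, and $A\le\phi(A)$ when $\tr A\in[n-1,n]$ since $\phi(A)-A=(1-f(\tr A^\perp))(I-A)\ge 0$. Combining these with the inequality $f(\tr A)\le f(\tr B)$ whenever $A\le B$ (which forces $\tr A\le\tr B$), each mixed-region case is a short chain of inequalities, and the both-upper case is deduced from the both-lower case by the already established orthocomplement identity.

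The main obstacle is coexistency preservation, and my strategy hinges on the structural observation that in every region $\phi(A)$ is a sub-convex combination
$$\phi(A)=\alpha(A)\,I+\beta(A)\,A+\bigl(1-\alpha(A)-\beta(A)\bigr)\cdot 0$$
of $I$, $A$, and $0$, with $\alpha(A),\beta(A)\ge 0$ and $\alpha(A)+\beta(A)\le 1$. I would then prove a \emph{convex combination lemma}: if $X_1\sim Y$ and $X_2\sim Y$ with witnesses $(E_i,F_i,G_i)$, then $\lambda X_1+(1-\lambda)X_2\sim Y$ via the convex-combined triple, the only nontrivial point being that $E+F+G\le I$ is preserved because $E(H)$ is convex. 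Since $I\sim Y$ and $0\sim Y$ hold for every $Y\in E_n$ by explicit witnesses, applying the lemma to $\{I,A,0\}$ against the fixed partner $B$ gives $\phi(A)\sim B$, and a second application to $\{I,B,0\}$ against the fixed partner $\phi(A)$ yields $\phi(A)\sim\phi(B)$, as required.
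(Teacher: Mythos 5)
Your proof is correct and takes essentially the same route as the paper's: the continuity, bijectivity-via-trace, and order-preservation arguments coincide in substance, and the coexistency step rests on the same key fact, the convexity of $B^\sim$ (Lemma~\ref{convex}, which the paper imports from Geh\'er--\v Semrl while you reprove it directly by convex-combining witness triples). The only other difference is packaging: you write $\phi(A)$ uniformly as a sub-convex combination of $I$, $A$ and $0$ in all three trace regions, whereas the paper first reduces to ${\rm tr}\, A \le n/2$ using the $\perp$-symmetry and then convex-combines only $A$ with $0$.
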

This example came as a surprise. Our expectation was that in the presence of the assumptions of bijectivity, continuity, and finite-dimensionality we will be able to prove that every map
preserving enough properties in one direction only has to be a standard automorphism. And in fact, we were able to prove quite a lot of nice structural properties of such maps but
instead of coming to the desired conclusion the detailed analysis of such maps brought us to the above example showing that our starting conjecture was wrong.

Our first strategy to prove the main theorem was to use topological tools to show that $\phi$ is bijective and then to apply the known result on bijective maps preserving coexistency on $E(H)$.
It turned out that a shorter direct proof is possible. But of course, we are still using quite a few ideas from \cite{GeS}. In the next section we will first formulate few lemmas that have been
proved before. In particular, Propositions \ref{fitne} and \ref{micmo} will be deduced from some of the known results.  
In the rest of the paper all the ideas used in the proofs are new. After proving a few new technical results in the second section we will give the
proof of our main theorem in the last section.

\section{Preliminary results}

Let $H$ be a Hilbert space, $\dim H \ge 2$.
We need some more notation. For $A\in E(H)$ we denote
$$
A^\sim = \{ B \in E(H)\, : \, A \sim B \}
$$
and 
$$
A^c = \{ B \in E(H)\, : \, AB = BA \}.
$$
The following facts about coexistency are well-known (see \cite[p.440]{Mol0} and \cite[p.140]{Mol}):

\begin{lemma}\label{properties}
For every $A,B \in E(H)$ and $P\in P(H)$ we have:
\begin{itemize}
		\item $A^\sim = E(H)$ if and only if $A \in {\rm Sca}\, (H)$,
		\item $P^\sim = P^c$,
		\item $A^c \subset A^\sim$,
\item if $A$ and $B$ are rank one effects with different images then $A \sim B$ if and only if  $A+B \in E(H)$.
	\end{itemize}
\end{lemma}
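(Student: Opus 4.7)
My plan is to verify the four items one at a time, ordering them so that the short direct constructions come first; the last direct construction will then be available to settle parts of the harder inclusions.

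First, for item (3), I would exhibit an explicit coexistence witness: for commuting $A, B \in E(H)$, set $G = AB$, $E = A(I - B)$, $F = (I - A)B$. Since $A$ and $B$ commute and each lies in $E(H)$, each of $E, F, G$ is a product of two commuting effects and hence is again an effect, and $E + F + G = A + B - AB = I - (I - A)(I - B) \in E(H)$ because $(I - A)(I - B)$ is itself a product of commuting effects. From this the ``if'' direction of item (1) is immediate: every scalar commutes with every operator, so $tI \in B^c \subseteq B^\sim$ for every $B \in E(H)$ and every $t \in [0,1]$. The inclusion $P^c \subseteq P^\sim$ of item (2) is likewise a special case of (3).

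For the reverse inclusion $P^\sim \subseteq P^c$ of item (2), I would exploit the rigidity of projections. Suppose $P = E + G$, $B = F + G$, with $E + F + G \in E(H)$. For any $v \in \Image P^\perp$ one has $0 = \langle Pv, v\rangle = \langle Ev, v\rangle + \langle Gv, v\rangle$; combined with $E, G \ge 0$ this yields $Ev = Gv = 0$. So both $E$ and $G$ are supported on $\Image P$ and in particular commute with $P$. The constraint $E + F + G \le I$ now reads $P + F \le I$, i.e.\ $F \le P^\perp$, so $F$ is supported on $\Image P^\perp$ and likewise commutes with $P$. Hence $B = F + G$ commutes with $P$.

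For item (4) the ``if'' direction uses $G = 0$, $E = A$, $F = B$. For the ``only if'' direction I would use the rank-one structure: writing $A = \alpha v v^\ast$, $B = \beta w w^\ast$ with $v, w$ linearly independent and $\alpha, \beta > 0$, the inequality $G \le A$ forces $\Image G \subseteq \Image A$ and $G \le B$ forces $\Image G \subseteq \Image B$; since these two one-dimensional spaces meet only at $0$, we get $G = 0$, so $E = A$, $F = B$, and $A + B = E + F + G \in E(H)$. Finally, the ``only if'' direction of item (1) is the genuinely global statement. My plan is to take a non-trivial spectral projection $P$ of the non-scalar $A$, pick some $B$ not commuting with $P$, and then transfer non-coexistence from $P$ back to $A$ by a reduction argument.

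The main obstacle I expect is exactly this last transfer. The first three items admit compact, self-contained verifications along the above lines, but producing a witness $B \not\sim A$ from the non-scalarity of $A$ is a global statement that does not reduce to a one-line construction; this is the step where I would lean on the structural analysis in \cite{Mol0, Mol} cited alongside the lemma, which is presumably why the authors state the lemma as known background rather than reprove it.
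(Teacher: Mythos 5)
The paper does not prove this lemma at all; it is stated as known background with citations to Moln\'ar's work, so any correct self-contained argument is by definition a different route. Your direct verifications are sound: the decomposition $G=AB$, $E=A(I-B)$, $F=(I-A)B$ with $E+F+G=I-(I-A)(I-B)$ is the standard witness for item (3) and correctly yields the easy halves of items (1) and (2); the support argument ($0\le T$ and $\langle Tv,v\rangle=0$ forcing $Tv=0$, hence $T=PTP$) correctly gives $P^\sim\subseteq P^c$, including the step where $E+G=P$ turns the constraint $E+F+G\le I$ into $F\le P^\perp$; and the rank-one argument for item (4), with $\Image G\subseteq\Image A\cap\Image B=\{0\}$, is complete. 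What you buy over the paper is an actually checkable proof of three and a half of the four items; what you lose is that the one genuinely nontrivial direction --- $A^\sim=E(H)$ implies $A$ scalar --- is still outsourced, so your writeup is not yet self-contained either. If you want to close that last gap without the literature, note that by Lemma \ref{lem3} a rank-one projection $P=vv^\ast$ satisfies $P\sim A$ iff $P=M+N$ with $M\le A$, $N\le A^\perp$; positivity forces $M=sP$, $N=(1-s)P$, and (when $A$ and $A^\perp$ are invertible) this amounts to $\langle A^{-1}v,v\rangle^{-1}+\langle (A^\perp)^{-1}v,v\rangle^{-1}\ge 1$, which fails for a suitable $v$ mixing two eigenvectors of a nonscalar $A$ with sufficiently separated eigenvalues; a small additional case analysis handles noninvertible $A$ or $A^\perp$. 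As it stands, your proposal is correct in everything it actually argues and defers exactly one implication to the same sources the paper cites.
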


The next two lemmas were proved in \cite{Se2}.

\begin{lemma}\label{dirsum}
Let $H = H_1 \oplus \ldots \oplus H_k$ be an orthogonal direct sum decomposition. Assume that effects $A,B \in E(H)$
have the following operator matrix representations with respect to this direct sum decomposition:
$$
A = \left[ \matrix{ A_1 & 0 & \ldots & 0 \cr 0 & A_2 & \ldots & 0 \cr  \vdots & \vdots & \ddots & \vdots \cr 0 & 0 & \ldots & A_k \cr} \right] \ \ \ {\rm and} \ \ \ 
B = \left[ \matrix{ B_1 & 0 & \ldots & 0 \cr 0 & B_2 & \ldots & 0 \cr  \vdots & \vdots & \ddots & \vdots \cr 0 & 0 & \ldots & B_k \cr} \right].
$$
Then $A_j \sim B_j$, $j=1, \ldots , k$, if and only if $A \sim B$.
\end{lemma}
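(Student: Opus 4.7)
The forward implication is immediate: given effects $E_j, F_j, G_j \in E(H_j)$ with $A_j = E_j+G_j$, $B_j = F_j+G_j$, and $E_j+F_j+G_j \in E(H_j)$ for each $j$, the block-diagonal operators $E = \diag(E_1,\ldots,E_k)$, $F = \diag(F_1,\ldots,F_k)$, $G = \diag(G_1,\ldots,G_k)$ lie in $E(H)$, they satisfy $A = E+G$ and $B = F+G$, and $E+F+G$ is again block-diagonal with $j$-th block $E_j+F_j+G_j \le I_{H_j}$, hence lies in $E(H)$.

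For the reverse implication, suppose $A \sim B$ with witnesses $E,F,G \in E(H)$, so $A = E+G$, $B = F+G$, and $E+F+G \in E(H)$. Let $P_j$ be the orthogonal projection of $H$ onto $H_j$. The plan is to obtain witnesses for $A_j \sim B_j$ by compression: define
$$
\tilde E_j = P_j E P_j|_{H_j}, \qquad \tilde F_j = P_j F P_j|_{H_j}, \qquad \tilde G_j = P_j G P_j|_{H_j},
$$
viewed as self-adjoint operators on $H_j$. Each is positive because $E,F,G$ are, and each is bounded above by $P_j I P_j|_{H_j} = I_{H_j}$, so all three belong to $E(H_j)$.

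The point where block-diagonality enters is the identity $P_j A P_j|_{H_j} = A_j$, which holds precisely because $A$ has no off-diagonal blocks with respect to the decomposition $H = H_1 \oplus \cdots \oplus H_k$. Combined with $A = E+G$, this gives $A_j = \tilde E_j + \tilde G_j$, and analogously $B_j = \tilde F_j + \tilde G_j$. Finally $\tilde E_j + \tilde F_j + \tilde G_j = P_j(E+F+G)P_j|_{H_j}$ is the compression of an effect, so it lies in $E(H_j)$. This exhibits $A_j \sim B_j$ for every $j$.

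There is no genuine obstacle here; the only thing requiring care is bookkeeping, namely consistently viewing $P_j X P_j$ either as an operator on $H$ (for computing sums) or as its restriction to $H_j$ (for verifying membership in $E(H_j)$), and noting that the block-diagonal hypothesis on $A$ and $B$ is exactly what turns the generally lossy compression $X \mapsto P_j X P_j$ into an equality on these two particular operators.
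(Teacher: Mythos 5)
Your proof is correct. The paper does not prove this lemma itself but cites it from \cite{Se2}; your argument (block-diagonal assembly of witnesses for the forward direction, compression by the projections $P_j$ for the reverse, using that block-diagonality makes $P_jAP_j|_{H_j}=A_j$ exact) is the standard one and all the verifications --- positivity of compressions, $P_jEP_j\le P_j$, and $\tilde E_j+\tilde F_j+\tilde G_j\le I_{H_j}$ --- check out.
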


\begin{lemma}\label{lem3}

	For any $A,B \in E(H)$ the following are equivalent:
	\begin{itemize}
		\item $A \sim B$,
		\item there exist effects $M,N \in E(H)$ such that $M \le A$, $N \le I-A$, and $M+N = B$.
	\end{itemize}
\end{lemma}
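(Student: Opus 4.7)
The plan is to peel off the definition of $\sim$ and produce an explicit bijection between the two kinds of witnesses. Recall that $A \sim B$ means there exist effects $E,F,G \in E(H)$ with $A = E+G$, $B = F+G$, and $E+F+G \in E(H)$ (the last condition packaging positivity plus $E+F+G \le I$). I would handle the two implications symmetrically, each by an explicit construction.

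For the forward direction, assume $A \sim B$ with witnesses $E,F,G$ as above. The natural choice is to set $M := G$ and $N := F$. Then $M+N = G+F = B$ by hypothesis. Positivity of $M$ and $N$ is inherited from $G, F \in E(H)$. Moreover $A - M = E \ge 0$, hence $M \le A$; and since $E+F+G \le I$ we have $N = F \le I - E - G = I - A$. So $M,N \in E(H)$ (they are each dominated by effects), giving condition (ii).

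For the reverse direction, assume $M,N \in E(H)$ satisfy $M \le A$, $N \le I-A$, and $M+N = B$. Set $G := M$, $E := A - M$, $F := N$. Then $E \ge 0$ by $M \le A$, and $E \le A \le I$, so $E \in E(H)$; clearly $G,F \in E(H)$. By construction $A = E + G$ and $B = F + G$. The one nontrivial check is that $E + F + G = (A-M) + N + M = A + N$ belongs to $E(H)$: positivity is automatic from $A,N \ge 0$, and $A + N \le A + (I - A) = I$ using $N \le I - A$. Thus $A \sim B$ with witnesses $(E,F,G)$.

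I do not anticipate a real obstacle: the content of the lemma is essentially a bookkeeping rearrangement of the definition, with the mild point that one must verify positivity and the $\le I$ bound for each constructed operator. The slightly delicate step, if any, is recognising that the condition "$E+F+G \in E(H)$" is equivalent to the pair of inequalities needed on $M$ and $N$; once this is observed, both directions become one-line algebraic identities.
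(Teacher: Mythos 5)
Your proof is correct: both directions are verified by the explicit substitutions $(M,N)=(G,F)$ and $(E,F,G)=(A-M,N,M)$, and all the positivity and $\le I$ checks go through. The paper does not prove this lemma itself but cites it from \cite{Se2}; your argument is the natural direct unpacking of the definition of coexistence and is exactly what one would expect that proof to be.
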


It is now easy to prove Proposition \ref{fitne}.

\begin{proof}[Proof of Proposition \ref{fitne}]
We only need to verify (\ref{ccc}). If $\phi (A) \sim \phi (B)$ then by Lemma \ref{dirsum} we have $A \sim B$. Applying this lemma once more we see that it remains to show that for every pair $A,B \in E(H)$ we have
$A \sim B \Rightarrow \varphi_j (A) \sim \varphi_j (B)$, $j=1,2,3$. 
In the case $j=1$ the desired implication follows easily from Lemma \ref{lem3}. 
The case $j=2$ is clear by the definition of coexistency.
In the case $j=3$ we apply the item three of Lemma \ref{properties} to conclude the proof.
\end{proof}

In order to prove Proposition \ref{micmo} we need the following result from \cite{GeS}.

\begin{lemma}\label{convex}
For every $A\in E(H)$ the set $A^\sim$ is convex.
\end{lemma}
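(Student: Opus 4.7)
The plan is to reduce convexity of $A^\sim$ to convexity of the set of pairs witnessing coexistence via Lemma \ref{lem3}. Fix $A \in E(H)$ and suppose $B, C \in A^\sim$, so $A \sim B$ and $A \sim C$. By Lemma \ref{lem3} there exist effects $M_B, N_B, M_C, N_C \in E(H)$ with
$$
M_B \le A, \quad N_B \le I - A, \quad M_B + N_B = B,
$$
$$
M_C \le A, \quad N_C \le I - A, \quad M_C + N_C = C.
$$
Given $t \in [0,1]$, I would set
$$
M := t M_B + (1-t) M_C, \qquad N := t N_B + (1-t) N_C.
$$

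The verification that $(M,N)$ is a valid witness for $A \sim tB + (1-t)C$ is then routine: the operator inequality on self-adjoint operators is preserved by convex combinations, so $M \le t A + (1-t) A = A$ and likewise $N \le I - A$; both $M$ and $N$ are positive since they are convex combinations of positive operators; they are bounded above by $A$ and $I-A$ respectively and hence by $I$, so they lie in $E(H)$; and by construction $M + N = t(M_B + N_B) + (1-t)(M_C + N_C) = tB + (1-t)C$. Applying the reverse direction of Lemma \ref{lem3} then yields $A \sim tB + (1-t)C$, which is exactly the statement that $A^\sim$ is convex.

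There is no real obstacle here, which is probably why the authors cite this as a preliminary lemma borrowed from \cite{GeS}: once the equivalence given by Lemma \ref{lem3} is available, convexity is a direct consequence of the fact that the order cone of positive operators (and its translate by $A$) is convex. The only mild subtlety worth flagging is that one must genuinely have both inequalities $M_B \le A$ and $N_B \le I - A$ from Lemma \ref{lem3}, rather than just the decomposition $B = E + G$ with $A = E + F$ from the original definition of coexistency; with the definition alone, taking convex combinations of the triples $(E,F,G)$ also works but the verification that the sum $E+F+G$ remains an effect is marginally more awkward than the two-inequality form used above.
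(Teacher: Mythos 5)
Your argument is correct. The paper itself gives no proof of this lemma --- it is simply imported from \cite{GeS} --- so there is nothing to compare against line by line; but your route is the natural one and it is complete: given witnesses $M_B \le A$, $N_B \le I-A$, $M_B+N_B=B$ and $M_C \le A$, $N_C \le I-A$, $M_C+N_C=C$ from Lemma \ref{lem3}, the convex combinations $M=tM_B+(1-t)M_C$ and $N=tN_B+(1-t)N_C$ are again effects satisfying $M\le A$, $N\le I-A$, $M+N=tB+(1-t)C$, and the reverse implication of Lemma \ref{lem3} gives $A\sim tB+(1-t)C$. The only quibble is in your closing aside, where you misstate the original definition of coexistency (it is $A=E+G$, $B=F+G$ with $E+F+G\in E(H)$, not $B=E+G$, $A=E+F$); this does not affect the proof, which never uses that definition directly.
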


\begin{proof}[Proof of Proposition \ref{micmo}]
Clearly, $\phi$ is continuous and satisfies
$$
\phi (A^\perp) = \phi(A)^\perp
$$
for every $A \in E(H)$.

To see that it is bijective we only need to verify that the restriction of $\phi$ to the set of all effects $A$ satisfying $0 \le {\rm tr}\, A \le 1$ is a bijection of this set onto itself. According to our assumptions
the function $t \mapsto f(t)t$ is a monotone increasing bijection of $[0,1]$ onto itself. If $0 \le {\rm tr}\, A \le 1$ and $0 \le {\rm tr}\, B \le 1$ and
$$
f({\rm tr}\, A) A = f({\rm tr}\, B) B,
$$
then $f({\rm tr}\, A) \, {\rm tr}\, A = f({\rm tr}\, B) \, {\rm tr}\, B$, and therefore ${\rm tr}\, A = {\rm tr}\, B$. It follows that $A=B$.
For any nonzero $B$ with $0 < {\rm tr}\, B = s \le 1$ there exists $t \in [0,1]$ with $f(t)t = s$. Set $A = (t/s)B$. Obviously, ${\rm tr}\, A = t \in [0,1]$. We have
$$
\phi (A) = f(t) (t/s) B = B,
$$
as desired.

Assume now that for $A,B \in E(H)$ we have $A \le B$. Then ${\rm tr}\, A \le {\rm tr}\, B$. Moreover, we have $\phi (A) \le A$ if $0 \le {\rm tr}\, A \le 1$ and
because $\phi$ is $\perp$-preserving this yields $\phi (A) \ge A$ if $n-1 \le {\rm tr}\, A \le n$. It is now straightforward to verify that $\phi (A) \le \phi (B)$.

Finally, suppose that the effects $A,B$ are coexistent. We must verify that then $\phi (A)$ and $\phi (B)$ are coexistent, too. Since $\phi$ is $\perp$-preserving map there is no loss of
generality in assuming that ${\rm tr}\, A , {\rm tr}\, B \le n/2$. Since both $A$ and $0$ are coexistent with $B$ and $B^\sim$ is convex, $cA$ is coexistent with $B$ for any $c\in [0, 1]$,
and therefore, $\phi (A)$ is coexistent with $B$. Using exactly the same arguments we conclude that $\phi (B)$ is coexistent with $\phi (A)$.
\end{proof}
%\enp

A trivial consequence of Lemma \ref{lem3} is that for every $A\in E(H)$ we have $A^\sim = (A^\perp )^\sim$. But actually, much more is true. The following statement is one of the two main results in \cite{GeS}.

\begin{lemma}\label{lem4}
For any pair $A,B \in E(H)$ the following are equivalent:
\begin{itemize}
		\item $A^\sim = B^\sim$,
		\item $A= B$ or $A= B^\perp$ or $A,B \in {\rm Sca}\, (H)$.
	\end{itemize}
\end{lemma}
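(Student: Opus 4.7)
The easy direction is direct: if $A=B$ it is trivial; if $A,B\in{\rm Sca}(H)$, Lemma \ref{properties} gives $A^\sim=B^\sim=E(H)$; if $A=B^\perp$, the characterization of $A^\sim$ in Lemma \ref{lem3} (existence of $M\le A$, $N\le A^\perp$ with $M+N=C$) is symmetric under $A\leftrightarrow A^\perp$ via exchanging $M$ and $N$. For the forward direction, if $A\in{\rm Sca}(H)$ then $A^\sim=E(H)=B^\sim$ forces $B\in{\rm Sca}(H)$ by Lemma \ref{properties}; henceforth assume neither $A$ nor $B$ is a scalar.

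The next step is to recover the common spectral structure. For a rank-one projection $P$, Lemma \ref{properties} gives $P^\sim=P^c$, so $P\in A^\sim \iff A\in P^c \iff P$ commutes with $A \iff P$ projects onto an eigenvector of $A$. Hence the rank-one projections in $A^\sim$ are precisely those projecting onto eigenvectors of $A$. The eigenspace decomposition can be extracted intrinsically from this set: two such projections $vv^\ast$ and $ww^\ast$ correspond to the same eigenspace if and only if every line in $\linspan(v,w)$ is also represented. Consequently $A^\sim=B^\sim$ forces $A$ and $B$ to have identical spectral projections $P_1,\ldots,P_k$, giving $A=\sum_i\lambda_iP_i$ and $B=\sum_i\mu_iP_i$ with pairwise distinct $\lambda_i$'s and pairwise distinct $\mu_i$'s, all in $[0,1]$.

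The remaining -- and main -- task is to show $\mu_i=\lambda_i$ for all $i$, or $\mu_i=1-\lambda_i$ for all $i$. The plan is to reduce to the $2$-dimensional case via Lemma \ref{dirsum}. For any pair $i\ne j$ and unit vectors $v_i\in P_iH$, $v_j\in P_jH$, the subspace $K=\linspan(v_i,v_j)$ is invariant under both $A$ and $B$; for any $C\in E(K)$, Lemma \ref{dirsum} gives $C\oplus 0_{K^\perp}\in A^\sim$ if and only if $C\in(A|_K)^\sim$ (the $K^\perp$ block $0\sim A|_{K^\perp}$ is automatic since $0$ is scalar). Hence $(A|_K)^\sim=(B|_K)^\sim$ in $E_2$, with $A|_K=\diag(\lambda_i,\lambda_j)$ and $B|_K=\diag(\mu_i,\mu_j)$, both non-scalar. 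The crux is then the $2$-dimensional case: probing with rank-one effects $rP_v$ for $v=(\cos\theta,\sin\theta)$ off the basis axes and solving the positivity constraints of Lemma \ref{lem3} on the decomposition $rP_v=M+N$ yields an explicit maximum $r_{\max}(\theta)$ as a function of $(\lambda_i,\lambda_j)$; this function is invariant under $(\lambda,\lambda')\mapsto(1-\lambda,1-\lambda')$ but is otherwise injective, forcing $(\mu_i,\mu_j)\in\{(\lambda_i,\lambda_j),(1-\lambda_i,1-\lambda_j)\}$. Finally, global consistency across all pairs -- using any index $i_0$ with $\lambda_{i_0}\ne 1/2$, which exists since the distinct $\lambda_i$'s admit at most one value equal to $1/2$ -- forces a single global choice, yielding $B=A$ or $B=A^\perp$. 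The technical heart of the argument is thus the two-dimensional rigidity statement; everything else is organizational.
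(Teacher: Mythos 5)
The paper does not actually prove this lemma: it is imported verbatim as ``one of the two main results in \cite{GeS}'' and used as a black box, so there is no in-paper argument to compare against. Your attempt is therefore an independent proof. As a strategy for the finite-dimensional case it is essentially viable: the backward direction is handled correctly, recovering the common spectral projections from the rank-one projections in $A^\sim$ (via $P^\sim=P^c$ and the ``every line in $\linspan(v,w)$'' test) is sound, the reduction to $2\times 2$ blocks through Lemma \ref{dirsum} works because the complementary block is $0$, which is scalar, and the final patching argument using an index with $\lambda_{i_0}\neq 1/2$ is correct.

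There are, however, two genuine gaps. First, the lemma is stated for an arbitrary Hilbert space $H$ with $\dim H\ge 2$ (the standing finite-dimensionality convention is imposed only \emph{after} this lemma), and your argument collapses outside the finite-dimensional setting: a self-adjoint operator with continuous spectrum has no eigenvectors, so the set of rank-one projections in $A^\sim$ may be empty and the projections $P_1,\dots,P_k$ need not exist. Your proof at best covers the case this paper actually uses, not the lemma as stated. Second, the step you yourself identify as the technical heart --- that $\theta\mapsto r_{\max}(\theta)$ determines the ordered pair $(\lambda,\lambda')$ up to the simultaneous flip --- is asserted, not proved, and it is exactly where all the work lies. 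The assertion is true, but establishing it requires: (i) computing that for a unit vector $v$ with $\cos^2\theta=c$ one has $rP_v\sim\diag(\lambda,\lambda')$ iff $r\le\alpha(c)+\beta(c)$, where $\alpha(c)=(c/\lambda+(1-c)/\lambda')^{-1}$ and $\beta(c)=(c/(1-\lambda)+(1-c)/(1-\lambda'))^{-1}$ --- this uses that a rank-one positive sum forces both summands $M,N$ of Lemma \ref{lem3} to be multiples of $P_v$, together with the standard criterion for $sP_v\le A$; (ii) observing that only $r\le 1$ is visible, so one must first check that $\alpha+\beta<1$ on a nonempty open set of directions (true for nonscalar $\diag(\lambda,\lambda')$ by the strict harmonic--arithmetic mean inequality at $c=1/2$); and (iii) showing that equality of the resulting rational functions of $c$ forces the flip relation, e.g.\ by comparing poles and residues, with separate treatment of the degenerate cases where $\lambda$ or $\lambda'$ lies in $\{0,1\}$. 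None of this appears in your write-up; until it does, what you have is a plausible road map rather than a proof.
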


Till the end of this section $H$ will denote a finite-dimensional Hilbert space, $\dim H = n \ge 2$.
Let $p$ and $q$ be nonnegative integers, $p+q \le n$.
Then $E(p,q) \subset E_n$ is defined to be the set of all $A\in E_n$ such that $1$ is an eigenvalue of $A$ with the multiplicity $p$ and
$0$ is an eigenvalue of $A$ with the multiplicity $q$. In particular, $E(0,0)$ is the set of all effects $A$ such that both $A$ and $A^\perp$
are invertible and $E(p, n-p)$ is the set of all projections of rank $p$. 
Recall that $E(p, n-p)$ can be identified with the Grassmann space of all $p$-dimensional subspaces of ${\C}^n$, and therefore, $E(p,n-p)$ is a compact connected
topological manifold without boundary. 
Each $A\in E(p,q)$ is unitarily similar to a block diagonal matrix
\begin{equation}\label{ljil}
\left[ \begin{matrix} { I_p & 0 & 0 \cr 0 & B & 0 \cr 0 & 0 & 0_q \cr} \end{matrix} \right] ,
\end{equation}
where $I_p$ is the $p \times p$ identity matrix, $0_q$ is the $q\times q$ zero matrix and $B$ is an $(n-p-q)\times (n-p-q)$ diagonal matrix whose all
diagonal entries belong to the open interval $(0,1)$.

\begin{lemma}\label{lem5}
Let $p$ and $q$ be nonnegative integers, $p+q \le n$. If $A\in E(p,q)$ and a subset $U \subset A^\sim$ is homeomorphic to ${\R}^k$ for some positive integer $k$, then $k \le n^2 - 2pq$.
\end{lemma}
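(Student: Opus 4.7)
The plan is to show that $A^\sim$ sits inside a real affine subspace of $S(H)$ of dimension $n^2-2pq$, and then to apply Brouwer's invariance of dimension to conclude.

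First, since conjugation by a unitary is a homeomorphism of $E_n$ that preserves coexistency, I may assume without loss of generality that $A$ is already in the block diagonal form (\ref{ljil}), that is,
$$
A = \left[ \begin{matrix} I_p & 0 & 0 \cr 0 & B & 0 \cr 0 & 0 & 0_q \cr \end{matrix} \right],
$$
with respect to an orthogonal decomposition $H = H_1 \oplus H_2 \oplus H_3$ with $\dim H_1 = p$, $\dim H_3 = q$, and $B$ having all its eigenvalues in $(0,1)$.

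The next step is to apply Lemma \ref{lem3}: any $X \in A^\sim$ can be written as $X = M + N$ with $0 \le M \le A$ and $0 \le N \le I - A$. Writing $M = (M_{ij})_{i,j=1}^3$ in block form, the inequality $A - M \ge 0$ forces the $(3,3)$-block $-M_{33}$ to be positive, while $M \ge 0$ forces $M_{33} \ge 0$; hence $M_{33} = 0$. The standard fact that a vanishing diagonal block of a positive semi-definite matrix forces the corresponding off-diagonal rows and columns to vanish then gives $M_{13} = 0$ and $M_{23} = 0$. By the symmetric argument applied to $N \le I - A$ (whose top-left $p \times p$ block is zero), we get $N_{11} = 0$, $N_{12} = 0$, $N_{13} = 0$. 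Adding, $X_{13} = M_{13} + N_{13} = 0$. Thus $A^\sim$ lies inside the real linear subspace $V = \{X \in S(H) : X_{13} = 0\}$ of $S(H)$.

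Since $S(H)$ has real dimension $n^2$ and the constraint $X_{13} = 0$ is the vanishing of a $p\times q$ complex matrix, $V$ has real dimension $n^2 - 2pq$. Therefore $A^\sim \subseteq V \cong {\R}^{n^2-2pq}$. Finally, Brouwer's theorem on invariance of dimension implies that any subset of ${\R}^{n^2-2pq}$ homeomorphic to ${\R}^k$ must satisfy $k \le n^2 - 2pq$, completing the argument.

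The main technical point is the block-matrix positivity step, namely extracting the vanishing of the off-diagonal blocks from $M \le A$ and $N \le I-A$; this relies on the classical observation that if a positive Hermitian matrix has a zero diagonal block then the corresponding rows and columns of the matrix also vanish (a consequence of the Cauchy--Schwarz inequality for positive sesquilinear forms). Once this is in hand, the rest of the proof is bookkeeping on dimensions plus a single appeal to invariance of domain.
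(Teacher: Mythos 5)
Your proof is correct and follows essentially the same route as the paper's: reduce to the block form (\ref{ljil}), use Lemma \ref{lem3} to force the $(1,3)$-block of every element of $A^\sim$ to vanish, and conclude by invariance of domain. The only difference is that you spell out the block-positivity bookkeeping that the paper leaves implicit.
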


\begin{proof}
Without loss of generality we can assume that $A$ is of the form (\ref{ljil}). If $B \in E(H)$ belongs to $A^\sim$, then by Lemma \ref{lem3} we can find effects $M,N$ such that
$B = M + N$ and
$$
M \le \left[ \begin{matrix} { I_p & 0 & 0 \cr 0 & B & 0 \cr 0 & 0 & 0_q \cr} \end{matrix} \right] \ \ \ {\rm and} \ \ \
N \le \left[ \begin{matrix} { 0_p & 0 & 0 \cr 0 & B^\perp & 0 \cr 0 & 0 & I_q \cr} \end{matrix} \right] ,
$$
and therefore $B$ is a matrix of the form
\begin{equation}\label{aram}
\left[ \begin{matrix} { * & * & 0 \cr * & * & * \cr 0 & * & * \cr} \end{matrix} \right] .
\end{equation}
The real vector space of all hermitian matrices of the form (\ref{aram}) is of dimension $n^2 - 2pq$. Hence, $A^\sim$ can be considered as a subset of ${\R}^{n^2 - 2pq}$ and
the desired conclusion follows from the invariance of domain theorem.
\end{proof}

\begin{lemma}\label{restr}
Let $A,B$ be positive $n \times n$ hermitian matrices such that $0 \le B \le A$. Assume further that $A$ is invertible. Let $\varepsilon$ be any positive real number.
Then there exists a positive $n \times n$ hermitian matrix $C$ such that $0 \le C \le A$, both $C$ and $A-C$ are invertible, and $\| B-C \| < \varepsilon$.
\end{lemma}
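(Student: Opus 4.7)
The plan is to perturb $B$ by a small convex combination with the interior point $A/2$ of the order interval $[0,A]$. Specifically, for a parameter $\delta \in (0,1)$ to be chosen at the end, set
$$
C = (1-\delta)\, B + \frac{\delta}{2}\, A.
$$
The choice of $A/2$ (rather than $0$ or $A$) is deliberate: it simultaneously pushes $C$ away from both endpoints of $[0,A]$.

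First I would verify the order and invertibility conditions. Nonnegativity $C \ge 0$ is immediate since $C$ is a nonnegative combination of $B, A \ge 0$. The key computation is
$$
A - C = \Bigl(1 - \tfrac{\delta}{2}\Bigr) A \;-\; (1-\delta)\, B \;\ge\; \Bigl(1 - \tfrac{\delta}{2}\Bigr) A \;-\; (1-\delta)\, A \;=\; \tfrac{\delta}{2}\, A,
$$
where the inequality uses the hypothesis $B \le A$. Symmetrically, $C \ge (\delta/2)\, A$ follows from the definition by discarding the nonnegative term $(1-\delta)B$. Since $A$ is a positive invertible hermitian matrix its spectrum is bounded below by some $\lambda_{\min}(A) > 0$, so both $C$ and $A-C$ dominate $(\delta\,\lambda_{\min}(A)/2)\,I$ and are therefore invertible, while $0 \le C \le A$ has already been established.

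For the norm estimate,
$$
\|C - B\| \;=\; \Bigl\| \tfrac{\delta}{2}\, A - \delta\, B \Bigr\| \;\le\; \delta\Bigl(\tfrac{1}{2}\|A\| + \|B\|\Bigr) \;\le\; \tfrac{3\delta}{2}\,\|A\|,
$$
so choosing any $\delta \in (0,1)$ with $\delta < 2\varepsilon/(3\|A\|)$ (which is positive since $A$ is invertible forces $\|A\| > 0$) yields $\|B-C\| < \varepsilon$.

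I do not anticipate any real obstacle. The lemma is a routine interior-approximation fact, and the only substantive idea is the choice of perturbation direction: tilting $B$ slightly toward the interior point $A/2$ gives a uniform lower bound of order $\delta$ on both $C$ and $A-C$, which is exactly what is needed to guarantee simultaneous invertibility while remaining within $\varepsilon$ of $B$.
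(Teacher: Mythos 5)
Your proof is correct. It takes a genuinely different route from the paper's. The paper first normalizes to $A=I$ by the congruence $B\mapsto A^{-1/2}BA^{-1/2}$ (which preserves the order interval $[0,A]$ and invertibility), then diagonalizes $B$ unitarily and perturbs its diagonal entries into the open interval $(0,1)$; the smallness of $\|B-C\|$ is then only asserted qualitatively, and one has to note that the congruence and its inverse are continuous to transport the estimate back. Your argument instead perturbs $B$ by a convex combination with the interior point $A/2$, which yields the explicit two-sided bounds $(\delta/2)A\le C\le(1-\delta/2)A$ and an explicit admissible $\delta<2\varepsilon/(3\|A\|)$ in one stroke. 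What your approach buys is that it avoids the spectral theorem entirely, is fully quantitative, and works verbatim for positive invertible operators on an infinite-dimensional Hilbert space (or in any unital $C^*$-algebra), where the paper's diagonalization step would need replacing; what the paper's reduction buys is that after the normalization $A=I$ the statement becomes visibly trivial, at the cost of two preliminary transformations. Both proofs are complete and correct.
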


\begin{proof}
With no loss of generality we may assume that $A=I$. Indeed, if $A\not= I$, then we may replace $B$ and $A$ by $A^{-1/2}B A^{-1/2}$ and $A^{-1/2} A A^{-1/2}$, respectively. 
In the next step we apply the unitary similarity to conclude that there is no loss of generality if we further assume that $B$ is a diagonal matrix whose diagonal entries belong to the unit interval $[0,1]$.
We can get the matrix $C$ with the desired properties by an arbitrarily small perturbation of diagonal entries of $B$.
\end{proof}

Let $A,B$ be $n \times n$ hermitian matrices. We will write $A < B$ if $A \le B$ and $B-A$ is invertible.

\begin{lemma}\label{lenov}
Let $p$ and $q$ be nonnegative integers, $p+q \le n$. Assume that $A\in E(p,q)$ and that $U\subset E(0,0)$ is an open subset such that
$$
 U \cap  A^\sim \not= \emptyset.
$$
Then there exists a subset $W \subset E(0,0)$  such that $W \subset A^\sim$, $W \subset U$, and $W$ is homeomorphic to ${\R}^{n^2 - 2pq}$. 
\end{lemma}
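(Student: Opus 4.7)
The plan is to exhibit $W$ as an open ball in the real linear subspace $\mathcal V$ of hermitian $n \times n$ matrices of the block form (\ref{aram}), centered at a carefully perturbed point $B_1 \in U \cap E(0,0) \cap A^\sim$. Since $\mathcal V$ has real dimension $n^2-2pq$, any open ball in $\mathcal V$ is homeomorphic to $\R^{n^2-2pq}$, so it suffices to find such a $B_1$ around which every nearby $C \in \mathcal V$ lies in $A^\sim$. After unitary conjugation I may assume $A = {\rm diag}(I_p, B, 0_q)$ with $B$ an invertible diagonal matrix whose spectrum lies in $(0,1)$.

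I pick any $B_0 \in U \cap A^\sim$ and use Lemma \ref{lem3} to write $B_0 = M_0 + N_0$ with $0 \le M_0 \le A$ and $0 \le N_0 \le I - A$. Since $M_0 \ge 0$ is dominated by the block-diagonal $A$ whose last $q \times q$ block vanishes, the bottom $q$ rows and columns of $M_0$ must vanish, so $M_0$ is determined by its top-left $(n-q) \times (n-q)$ compression $\tilde M_0$, which satisfies $0 \le \tilde M_0 \le \tilde A := {\rm diag}(I_p, B)$. As $\tilde A$ is invertible, Lemma \ref{restr} provides $\tilde M_1$ arbitrarily close to $\tilde M_0$ with $0 < \tilde M_1 < \tilde A$ (both endpoints being invertible). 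A symmetric argument applied to $N_0$ produces $\tilde N_1$ close to $\tilde N_0$ with $0 < \tilde N_1 < \widetilde{I-A} := {\rm diag}(I - B, I_q)$. Zero-extending $\tilde M_1, \tilde N_1$ to $n \times n$ matrices $M_1, N_1$ and setting $B_1 := M_1 + N_1$, I obtain $B_1 \in \mathcal V$ as close to $B_0$ as I wish, hence inside $U \cap E(0,0)$ once the perturbations are small enough.

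The crucial step is to show that every $C \in \mathcal V$ sufficiently close to $B_1$ lies in $A^\sim$. Keeping the middle $(n-p-q) \times (n-p-q)$ block $X_1$ of $M_1$ fixed, I define
$$
M(C) := \left[ \begin{matrix} C_{11} & C_{12} & 0 \\ C_{12}^\ast & X_1 & 0 \\ 0 & 0 & 0 \end{matrix} \right], \qquad N(C) := C - M(C),
$$
where $C_{ij}$ are the blocks of $C$; note that the first $p$ rows and columns of $N(C)$ vanish automatically. At $C = B_1$ the four positivity conditions $M(C) \ge 0$, $A - M(C) \ge 0$, $N(C) \ge 0$, $(I-A) - N(C) \ge 0$ reduce via the vanishing blocks to the strict inequalities $\tilde M_1 > 0$, $\tilde A - \tilde M_1 > 0$, $\tilde N_1 > 0$, $\widetilde{I-A} - \tilde N_1 > 0$, so by continuity they persist on a whole $\mathcal V$-neighborhood of $B_1$. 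Lemma \ref{lem3} then certifies $C \in A^\sim$ for every such $C$, and I take $W$ to be any $\mathcal V$-open ball around $B_1$ small enough to lie inside $U \cap E(0,0)$ and inside this neighborhood.

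The main obstacle I anticipate is \emph{strictness}: an arbitrary initial decomposition $B_0 = M_0 + N_0$ might saturate one of the four positivity inequalities, in which case the continuity argument in the last step would collapse. The preparatory double application of Lemma \ref{restr} on the nondegenerate compressions $\tilde A$ and $\widetilde{I-A}$ is precisely what supplies the slack needed, at the small cost of moving the center from $B_0$ to the nearby $B_1$.
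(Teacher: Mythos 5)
Your proof is correct and follows essentially the same route as the paper's: normalize $A$ to the form (\ref{ljil}), decompose a point of $U\cap A^\sim$ via Lemma \ref{lem3}, perturb the two summands to strict inequalities using Lemma \ref{restr}, and then exploit the openness of $U$ together with the dimension count for hermitian matrices of the form (\ref{aram}). The only cosmetic difference is that you produce the full-dimensional set by exhibiting an explicit continuous section $C\mapsto (M(C),N(C))$ of the sum map on a ball in the space of matrices of the form (\ref{aram}), whereas the paper takes the image of a product of open neighbourhoods of the two perturbed summands; both arguments yield the same kind of open subset of $A^\sim\cap U$.
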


\begin{proof}
We may assume that $A$ is of the form (\ref{ljil}). Using $U \cap  A^\sim \not= \emptyset$
and Lemma \ref{lem3} we can find effects $M,N$ such that
$M + N \in U$ and
$$
M \le \left[ \begin{matrix} { I_p & 0 & 0 \cr 0 & B & 0 \cr 0 & 0 & 0_q \cr} \end{matrix} \right] \ \ \ {\rm and} \ \ \
N \le \left[ \begin{matrix} { 0_p & 0 & 0 \cr 0 & B^\perp & 0 \cr 0 & 0 & I_q \cr} \end{matrix} \right] .
$$
It follows that matrices $M$ and $N$ are of the form
$$
\left[ \begin{matrix} { * & * & 0 \cr * & * & 0 \cr 0 & 0 & 0 \cr} \end{matrix} \right] \ \ \ {\rm and} \ \ \
\left[ \begin{matrix} { 0 & 0 & 0 \cr 0 & * & * \cr 0 & * & * \cr} \end{matrix} \right],
$$
respectively. Applying Lemma \ref{restr} and the fact that $U$ is open we can find effects $K,L$ such that
$K+L \in U$, 
$$
K = \left[ \begin{matrix} { K_1 & K_2 & 0 \cr K_3 & K_4 & 0 \cr 0 & 0 & 0 \cr} \end{matrix} \right] \ \ \ {\rm and} \ \ \
L = \left[ \begin{matrix} { 0 & 0 & 0 \cr 0 & L_1 & L_2 \cr 0 & L_3 & L_4 \cr} \end{matrix} \right] ,
$$
and
$$
0 < \left[ \begin{matrix} { K_1 & K_2  \cr K_3 & K_4  \cr} \end{matrix} \right] <  \left[ \begin{matrix} { I_p & 0  \cr 0 & B  \cr} \end{matrix} \right]
\ \ \ {\rm and} \ \ \
0 < \left[ \begin{matrix} {  L_1 & L_2 \cr  L_3 & L_4 \cr} \end{matrix} \right] < \left[ \begin{matrix} { B^\perp & 0  \cr 0 & I_q  \cr} \end{matrix} \right] .
$$
Because $U$ is open we can find an open subset $W_1$ of the set of all $(n-q) \times (n-q)$ hermitian matrices and
an open subset $W_2$ of the set of all $(n-p) \times (n-p)$ hermitian matrices such that
$$
\left[ \begin{matrix} { K_1 & K_2  \cr K_3 & K_4  \cr} \end{matrix} \right] \in W_1 \ \ \ {\rm and} \ \ \ 
\left[ \begin{matrix} {  L_1 & L_2 \cr  L_3 & L_4 \cr} \end{matrix} \right]  \in W_2 ,
$$
\begin{equation}\label{aaq}
0 < T_1 <  \left[ \begin{matrix} { I_p & 0  \cr 0 & B  \cr} \end{matrix} \right]
\ \ \ {\rm and} \ \ \
0 < T_2 < \left[ \begin{matrix} { B^\perp & 0  \cr 0 & I_q  \cr} \end{matrix} \right] 
\end{equation}
for every $T_1 \in W_1$ and every $T_2 \in W_2$, and
$$
W_3 = \left\{ S \, : \, S= \left[ \begin{matrix} { S_1 & 0  \cr 0 & 0_q  \cr} \end{matrix} \right] + \left[ \begin{matrix} { 0_p & 0  \cr 0 & S_2  \cr} \end{matrix} \right]
\ \, S_1 \in W_1 \ \, {\rm and} \ \, S_2 \in W_2 \right\} \subset U.
$$
It follows from (\ref{aaq}) and Lemma \ref{lem3} that $W_3 \subset E(0,0) \cap A^\sim$. Hence, $W_3  \subset E(0,0) \cap A^\sim \cap U
$ is an open subset of the  $(n^2 - 2pq)$-dimensional real vector space of all $n \times n$ hermitian matrices
of the form
$$
\left[ \begin{matrix} { * & * & 0 \cr * & * & * \cr 0 & * & * \cr} \end{matrix} \right] ,
$$ 
and therefore, there exists a subset $W \subset W_3$ that is
homeomorphic to ${\R}^{n^2 - 2pq}$.
\end{proof}

\section{Proof of the main result}

We will divide the proof of Theorem \ref{main} into steps. So, assume that
$\dim H = n < \infty$, $n \ge 2$, and $\phi : E(H) \to E(H)$ is a continuous map satisfying (\ref{ccc}).

\begin{step}\label{prvi}
If $\phi (A) = \phi (B)$ for some $A,B \in E(H)$, then $A = B$ or $A= B^\perp$ or both $A$ and $B$ are scalar effects.
\end{step}

\begin{proof}
Let $C \in E(H)$ by any effect. Then $C \sim A$ if and only if $\phi (C) \sim \phi (A) = \phi (B)$ which is further equivalent to
$C \sim B$. Hence, $A^\sim = B^\sim$, and the conclusion follows from Lemma \ref{lem4}.
\end{proof}

\begin{step}\label{mart}
Let $p,q$ be nonnegative integers, $p + q \le n$. Then for every $A\in E(p,q)$ there exist
nonnegative integers $p', q'$, $p' + q' \le n$ such that
$$
p'q' = pq
$$
and
$$
\phi (A) \in E(p',q').
$$
\end{step}

\begin{proof} Let $A \in E(p,q)$ and $\phi(A)\in E(p', q')$.
With no loss of generality we can assume that $A$ is a diagonal matrix,
$$ 
A =  \left[ \begin{matrix} { I_p & 0 & 0  \cr 0 & D &0 \cr 0& 0 & 0_q  \cr} \end{matrix} \right] ,
$$
where $D$ is a diagonal matrix whose all eigenvalues belong to the open interval $(0,1)$. 
By Lemma \ref{lem3}, $A^{\sim}$ contains the set 
$$
\left\{ \left[ \begin{matrix} { M & 0 \cr 0 & 0_q \cr} \end{matrix} \right] + \left[ \begin{matrix} { 0_p & 0 \cr 0 & N \cr} \end{matrix} \right] \, : \, 
0_{n-q} < M < \left[ \begin{matrix} { \frac{1}{2}I_p & 0 \cr 0 & \frac{1}{2}D \cr} \end{matrix} \right], \ \
0_{n-p} < N < \left[ \begin{matrix} { \frac{1}{2}D^\perp & 0 \cr 0 & \frac{1}{2}I_q \cr} \end{matrix} \right] \right\}.   
$$
Let us denote the latter set ${\cal M}$. 
By Step \ref{prvi}, the set ${\cal M}\setminus {\rm Sca}\, (H)$ is injectively (and continuously) mapped into $\phi(A)^{\sim}$.  
Moreover, ${\cal M}\setminus {\rm Sca}\, (H)$ contains a subset that is homeomorphic to ${\R}^{n^2-2pq}$. 
We can also see that $\phi(A)^{\sim}$ is homeomorphic to a subset of ${\R}^{n^2-2p'q'}$. 
By the invariance of domain theorem, we obtain $n^2-2pq\le n^2-2p'q'$. 

Denote by $B$ the diagonal matrix with the diagonal entries $1/3, 1/4, \ldots, 1/(n+2)$. Since $B$ and $A$ commute we have $B \sim A$. Further, denote by ${\cal O} \subset E(0,0)$
the open set
$$
{\cal O} = \left\{ C \in E(H) \, : \, \| C - B \| < {1 \over n + 22} \right\}.
$$
Then it is clear that ${\cal O}$ is an open subset of $H_n$, the set of all $n\times n$ hermitian matrices. Moreover, ${\cal O}$ does not contain
any scalar effect and all effects contained in ${\cal O}$ have all eigenvalues $< 1/2$. Consequently, if $C \in {\cal O}$ then certainly $C^\perp \not\in {\cal O}$. It follows from Step \ref{prvi}
that $\phi$ maps ${\cal O}$ continuously and injectively into $H_n$ that  is homeomorphic to ${\R }^{n^2}$. By the invariance of domain theorem $\phi ({\cal O})$ is an open subset of $H_n$.
Consequently,  $\phi ({\cal O})$ is an open subset of $E(0,0)$ and the restriction of $\phi$ to ${\cal O}$ is a homeomorphism of ${\cal O}$ onto $\phi ({\cal O})$.

Since $\phi (B) \in 
\phi ( {\cal O})$ we have
$$
\phi ({\cal O}) \cap (\phi(A))^\sim \not= \emptyset.
$$
By Lemma \ref{lenov}  there exists a subset $W \subset E(0,0)$  such that $W \subset \phi ({\cal O}) \cap (\phi (A))^\sim$ and $W$ is homeomorphic to ${\R}^{n^2 - 2p'q'}$.
Therefore there exists a subset $W_1 = (\phi_{|{\cal O}})^{-1} (W) \subset A^\sim$ that  is homeomorphic to ${\R}^{n^2 - 2p'q'}$. By Lemma \ref{lem5} and the fact that $A \in E(p,q)$ we have
$$
n^2 - 2p'q' \le n^2 - 2pq,
$$
as desired.
\end{proof}

\begin{step}\label{m}
For $1\le  m\le  n-1$, we have either
$$
\phi(E(1, m)) \subset E(1, m)
$$
or
$$
\phi(E(1, m)) \subset E(m, 1). 
$$
Similarly, we have either
$$
\phi(E(m, 1)) \subset E(1, m)
$$
or
$$
\phi(E(m, 1)) \subset E(m, 1). 
$$
\end{step}
\begin{proof}
Let $A\in E(1, m)$ and $\phi(A)\in E(p', q')$. 
By Step \ref{mart}, we have $p'q'=m$. 
We can take an open neighborhood $W\ni \phi(A)$ in $E_n$ such that 
$$
W \subset \bigcup_{0\le  s\le  p',\, 0\le  t\le  q'} E(s, t). 
$$
For any open neighborhood $U\ni A$ in $E_n$ and any integer $0\le  k\le  m$, we can find an element $B\in E(1, k)\cap U$. 
Since $\phi$ is continuous, for any integer $0\le  k\le  m$, we must have $W\cap \phi(E(1, k)) \neq \emptyset$. 
It follows by Step \ref{mart} that 
$$
\{0, 1, 2, \ldots, m\} \subset \{st \ : \ 0\le  s\le  p',\, 0\le  t\le  q'\},  
$$
which implies $\{p', q'\} = \{1, m\}$. 
By the continuity of $\phi$ we obtain either $\phi(E(1, m)) \subset E(1, m)$ or $\phi(E(1, m)) \subset  E(m, 1)$. 
\end{proof}

\begin{step}\label{n-1}
We have either
$$
\phi(E(1, n-1)) = E(1, n-1)
$$
or
$$
\phi(E(1, n-1)) = E(n-1, 1). 
$$
\end{step}
\begin{proof}
By Step \ref{m}, we have either $\phi(E(1, n-1)) \subset E(1, n-1)$ or $\phi(E(1, n-1)) \subset E(n-1, 1)$. 
Note that $E(n-1, 1)$ and $E(1, n-1)$ are compact connected $(2n-2)$-dimensional manifolds without boundaries. 
If $n\ge  3$, then $\phi$ restricted to $E(1, n-1)$ is injective by Step \ref{prvi}, thus the invariance of domain theorem assures the desired conclusion. 
Even when $n=2$, we can see that $\phi|_{E(1, 1)}$ is locally injective and hence locally homeomorphic, which implies that $\phi(E(1, 1)) = E(1, 1)$.
\end{proof}

Combining $\phi$ with the orthocomplementation if necessary, we may and will assume $\phi(E(1, n-1))= E(1, n-1)$ from now on. 

\begin{step}\label{surj}
For any $P\in E(1, n-1)$, we have 
$$
\phi(\{cP + dP^{\perp}\, : \, 0\le  d\le  c\le  1\}) = \{c\phi(P) +d\phi(P)^{\perp}\, : \, 0\le  d\le  c\le  1\}. 
$$ 
Moreover, 
$$
\phi(\{cP \, : \, 0\le  c\le  1\}) \cup \phi(\{P +dP^{\perp} \, : \, 0\le  d\le  1\}) $$ 
is equal to
$$
\{c\phi(P) \, : \, 0\le  c\le  1\} \cup \{\phi(P) + d\phi(P)^{\perp} \, : \, 0\le  d\le  1\}. 
$$ 
We also have 
$$
\phi({\rm Sca} (H)) = {\rm Sca} (H) 
$$
and either 
\begin{equation}\label{01}
\phi(0)=0 \quad {\rm and} \quad \phi(I)=I
\end{equation}
or 
\begin{equation}\label{10}
\phi(0)=I \quad {\rm and} \quad \phi(I)=0.
\end{equation}
\end{step}
\begin{proof}
Let $P\in E(1, n-1)$.
By Lemma \ref{properties}, we see 
$$
\bigcap_{Q\in E(1, n-1)\cap P^{\sim}} Q^{\sim} = \{cP +dP^{\perp}\, : \, 0\le  c, d\le  1\}. 
$$
Since $\phi(E(1, n-1)) = E(1, n-1)$, we have
$$
\phi\left(\bigcap_{Q\in E(1, n-1)\cap P^{\sim}} Q^{\sim}\right) \subset \bigcap_{Q\in E(1, n-1),\,\, \phi(P)\sim\phi(Q)} \phi(Q)^{\sim} =  
\{c\phi(P) +d\phi(P)^{\perp}\, : \, 0\le  c, d\le  1\}
$$
and hence 
\begin{equation}\label{cd}
\phi(\{cP +dP^{\perp}\, : \, 0\le  c, d\le  1\}) \subset 
\{c\phi(P) +d\phi(P)^{\perp}\, : \, 0\le  c, d\le  1\}.
\end{equation}
By Lemma \ref{properties}, we also have $\bigcap_{Q\in E(1, n-1)} Q^{\sim} = {\rm Sca} (H)$ and $\phi({\rm Sca} (H))\subset {\rm Sca} (H)$. 
Moreover, if $A\in E_n$ satisfy $\phi(A)\in {\rm Sca} (H)$, then $\phi(A)^{\sim}\supset \phi(E_n)$ and hence $A^{\sim}=E_n$, which in turn implies $A\in {\rm Sca}(H)$. 
Note that we can identify the right hand side of (\ref{cd}) with a square.
It follows by the continuity of $\phi$ that the image 
$\phi(\{cP + dP^{\perp}\, : \, 0\le   d\le  c\le  1\})\,\, (\ni \phi(P))$ does not `go beyond the diagonal line ${\rm Sca} (H)$ in the square', hence it is a subset of $\{c\phi(P) +d\phi(P)^{\perp}\, : \, 0\le   d\le  c\le  1\}$.

Assume (for a contradiction) that a real number $0 < c_0 < 1$ satisfies $\phi(c_0P) = c_1\phi(P) +d_1\phi(P)^{\perp}$ for some $0\le  c_1< 1$ and $0<d_1\le  1$. 
Since $\phi$ is continuous at $P$, we can take a real number $0<d_0<1$ such that $\phi(P+d_0 P^{\perp}) = c_2\phi(P) + d_2\phi(P)^{\perp}$ with $c_1< c_2\le  1$ and $0\le  d_2<d_1$. 
Since the convex hull of $\{\phi(P+d_0 P^{\perp})\} \cup {\rm Sca} (H)$ contains $\phi(c_0 P)= c_1\phi(P) +d_1\phi(P)^{\perp}$, we obtain $\phi(P+d_0 P^{\perp})^{\sim} \subset \phi(c_0 P)^{\sim}$ by Lemma \ref{convex}. 
Hence we also have $(P+d_0 P^{\perp})^{\sim} \subset (c_0 P)^{\sim}$. 

Let $Q\in E(1, n-1)$ satisfy $Q\neq P$ and $Q\nleq P^{\perp}$. 
Let $t$ be a real number with $0<t<0$. 
We claim the following: 
\begin{itemize}
\item $tQ \sim P+d_0 P^{\perp} \iff tQ\le  P+ d_0 P^{\perp}$.
\item $tQ \sim c_0 P \iff tQ\le  (1-c_0)P + P^{\perp}$. 
\end{itemize}
If these are true,  it is easy to see that the condition $(P+d_0 P^{\perp})^{\sim} \subset (c_0 P)^{\sim}$ can never hold, hence we obtain a contradiction. 
Let us prove the claim. 
By Lemma \ref{lem3}, we have 
$$
(P+d_0 P^{\perp})^{\sim} = \{A+B \ : \ 0\le  A\le  P+d_0 P^{\perp},\,\, 0\le  B\le  (1-d_0)P^{\perp}\},  
$$
hence the rank one effect $tQ$ with $Q\neq P$, $Q\nleq P^{\perp}$ is an element of this set if and only if $tQ\le  P+d_0 P^{\perp}$. 
(Here we use the elementary fact that for $A, B\in E_n$, the matrix $A+B$ is of rank $\le  1$ if and only if $A, B$ are linearly dependent and of rank $\le  1$.)
Thus we obtain the former equivalence, and the latter can be seen similarly. 

Therefore, we obtain 
$$
\phi(\{cP \, : \, 0< c< 1\}) \subset \{c\phi(P) \, : \, 0\le  c\le  1\} \cup \{\phi(P) + d\phi(P)^{\perp} \, : \, 0\le  d\le  1\}. 
$$
In a similar way, we also obtain 
$$
\phi(\{P + dP^{\perp} \, : \, 0< d< 1\}) \subset \{c\phi(P) \, : \, 0\le  c\le  1\} \cup \{\phi(P) + d\phi(P)^{\perp} \, : \, 0\le  d\le  1\}. 
$$
Combine these with the facts 
\begin{itemize}
\item $\phi({\rm Sca} (H))\subset {\rm Sca} (H)$, 
\item $\phi$ is continuous on $E_n$ and injective (by Step \ref{prvi}) on 
$$
\{cP \, : \, 0< c\le  1\} \cup \{P + dP^{\perp} \, : \, 0\le  d< 1\},
$$
and
\item the set $\{cP + dP^{\perp}\, : \, 0\le  d\le  c\le  1\}$ is simply connected and the boundary 
$$
{\rm Sca} (H) \cup \{cP \, : \, 0< c\le  1\} \cup \{P + dP^{\perp} \, : \, 0\le  d< 1\}
$$
is homeomorphic to the circle $S^1$
\end{itemize}
to obtain the desired conclusion. 
\end{proof}

\begin{step}\label{perp}
For any $P\in E(1, n-1)$, we have $\phi(P^{\perp}) = \phi(P)^{\perp}$. 
\end{step}
\begin{proof}
Let $P\in E(1, n)$. 
By (\ref{cd}) and Step \ref{m}, we obtain either $\phi(P^{\perp})= \phi(P)$ or $\phi(P^{\perp})=\phi(P)^{\perp}$. 
We prove that the former option never holds. 
Assume that $\phi(P^{\perp})= \phi(P)$. 
Using the same argument as in the proof of the preceding step, we obtain 
$$
\phi(\{cP +dP^{\perp}\, : \, 0\le  c\le  d\le  1\}) = \{ c\phi(P) +d\phi(P)^{\perp}\, : \, 0\le  d\le  c\le  1\}.  
$$
Let $0\le  c_0<d_0\le 1$. 
We know that $\phi(c_0P +d_0P^{\perp}) \in \{ c\phi(P) +d\phi(P)^{\perp}\, : \, 0\le  d< c\le  1\}$. 
By the preceding step, there exist $0\le  d_1<c_1\le  1$ such that $\phi(c_0P +d_0P^{\perp})= \phi(c_1P +d_1P^{\perp})$. 
By Step \ref{prvi}, we obtain $c_0+c_1=1$ and $d_0+d_1=1$. 
Therefore, the equation $\phi(c_0P +d_0P^{\perp})= \phi((1-c_0)P +(1-d_0)P^{\perp})$ holds for any $0\le  c_0<d_0\le 1$. 
Then the continuity of $\phi$ implies that $\phi(cI)= \phi((1-c)I)$ for any $0\le  c\le  1$, which contradicts (\ref{01}), (\ref{10}). 
\end{proof}

\begin{step}
Let $n=2$. 
Then there exists a unitary or antiunitary operator $U\colon H\to H$ such that either
$$
\phi(A) = UAU^\ast, \quad {A\in E_2}
$$
or
$$
\phi(A) = UA^{\perp}U^\ast, \quad {A\in E_2}.
$$
\end{step}
\begin{proof}
By Steps \ref{prvi}, \ref{n-1} and \ref{perp}, $\phi$ restricts to a homeomorphism on $E(1, 1)$.  
By Steps \ref{prvi} and \ref{surj}, a moment's reflection shows that $\phi$ maps the subset $E_2\setminus {\rm Sca} (H)$ bijectively onto itself. 
Define the mapping $\psi\colon E_2\to E_2$ by $\psi(A)=\phi(A)$ if $A\in E_2\setminus {\rm Sca} (H)$ and $\psi(cI)=cI$ for real $0\le c\le 1$. 
Then $\psi$ is a bijection such that 
$$
A \sim B \iff \psi (A) \sim \psi (B)
$$
for any $A, B\in E_2$. 
By \cite{GeS}, there exists a unitary or antiunitary operator $U\colon H\to H$ such that 
$$
\phi(A) = \psi(A) \in \{UAU^\ast, UA^{\perp}U^\ast\}, \quad A\in E_2\setminus {\rm Sca} (H).
$$
Since $E_2\setminus {\rm Sca} (H)$ is connected and $UAU^\ast\neq UA^{\perp}U^\ast$ on this subset, the continuity of $\phi$ implies the desired conclusion. 
\end{proof}

In what follows, we consider the case $n\ge 3$. 

\begin{step}
If $n\ge 3$, then $\phi|_{E(1, n-1)}\colon E(1, n-1)\to E(1, n-1)$ extends to a standard automorphism on $E_n$. 
\end{step}
\begin{proof}
We know that $\phi|_{E(1, n-1)}$ is a bijection on $E(1, n-1)$. 
For any $P, Q\in E(1, n-1)$ with $P\neq Q$, we have $P\sim Q$ if and only if $P\perp Q$. 
The desired conclusion is now a direct consequence of Uhlhorn's theorem \cite{U}.
\end{proof}

Therefore, without loss of generality, we may assume $\phi(P) = P$ for any projection $P$ of rank one.

For an effect $A$ we denote by $\sigma (A)$ the spectrum of $A$.

\begin{step}\label{avirio}
For every $A \in E(H)$, there exists an injective function $g_A : \sigma (A) \to [0,1]$ such that
$$
\phi (A) = g_A (A) .
$$
\end{step}

\begin{proof}
The desired conclusion is a straightforward consequence of the fact that for every rank one projection $P$ the effect $A$ commutes with $P$ if and only if $\phi (A)$ commutes with $P$.
\end{proof}

\begin{step}\label{ilasbm}
For every $n\times n$ unitary matrix $U$ and every real $r$, $0 < r < 1$, there exists a real $s \in (0,1)$ such that
$$
\phi \left( U \,  \left[ \begin{matrix} {r & 0 & 0 \cr 0 & 0 & 0 \cr 0 & 0 & I_{n-2} \cr} \end{matrix} \right] \, U^\ast \right) = 
 U \,  \left[ \begin{matrix} {s & 0 & 0 \cr 0 & 0 & 0 \cr 0 & 0 & I_{n-2} \cr} \end{matrix} \right] \, U^\ast.
$$
\end{step}

\begin{proof}
Note that 
$$\left\{ U \,  \left[ \begin{matrix} {r & 0 & 0 \cr 0 & 0 & 0 \cr 0 & 0 & I_{n-2} \cr} \end{matrix} \right] \, U^\ast \ : \ 0<r<1\right\} \subset E(n-2, 1).
$$ 
By Steps \ref{m} and \ref{avirio}, the image of this set by $\phi$ is in 
\begin{equation}\label{cube}
\left\{ U \,  \left[ \begin{matrix} {q_1 & 0 & 0 \cr 0 & q_2 & 0 \cr 0 & 0 & q_3 I_{n-2} \cr} \end{matrix} \right] \, U^\ast \ : \ q_1,\ q_2,\ q_3\in [0, 1] \right\} \cap (E(n-2, 1)\cup E(1, n-2)),   
\end{equation}
which is equal to the disjoint union of the four sets 
$$
\left\{ U \,  \left[ \begin{matrix} {t & 0 & 0 \cr 0 & 1 & 0 \cr 0 & 0 & 0_{n-2} \cr} \end{matrix} \right] \, U^\ast \ : \ t \in (0, 1) \right\},\quad \left\{ U \,  \left[ \begin{matrix} {1 & 0 & 0 \cr 0 & t & 0 \cr 0 & 0 & 0_{n-2} \cr} \end{matrix} \right] \, U^\ast \ : \ t \in (0, 1) \right\},
$$
$$
\left\{ U \,  \left[ \begin{matrix} {t & 0 & 0 \cr 0 & 0 & 0 \cr 0 & 0 & I_{n-2} \cr} \end{matrix} \right] \, U^\ast \ : \ t \in (0, 1) \right\},\quad \left\{ U \,  \left[ \begin{matrix} {0 & 0 & 0 \cr 0 & t & 0 \cr 0 & 0 & I_{n-2} \cr} \end{matrix} \right] \, U^\ast \ : \ t \in (0, 1) \right\} 
$$
if $\dim H \ge 4$. 
In this case, by Step \ref{perp}, $\phi$ fixes the projection $U \,  \left[ \begin{matrix} {1 & 0 & 0 \cr 0 & 0 & 0 \cr 0 & 0 & I_{n-2} \cr} \end{matrix} \right] \, U^\ast\in E(n-1, 1)$. 
The continuity of $\phi$ implies that $\phi$ maps the subset
$$
\left\{ U \,  \left[ \begin{matrix} {r & 0 & 0 \cr 0 & 0 & 0 \cr 0 & 0 & I_{n-2} \cr} \end{matrix} \right] \, U^\ast \ : \ 0<r<1\right\}
$$
into itself. 
Similarly, if $\dim H =3$, then the set (\ref{cube}) can be written as a disjoint union of six sets. 
Since $\phi$ fixes every projection in $E(1, 2)\cup E(2, 1)$, we obtain the same conclusion. 
\end{proof}
We fix an $n\times n$ unitary matrix $U$ for a while. 
By Steps \ref{prvi} and \ref{ilasbm}, for every $2 \times 2$ rank one projection $P$ there exists an injective continuous function $f_P : [0,1] \to [0,1]$ with $f_P(1)=1$ such that
$$
\phi \left( U \,  \left[ \begin{matrix} {rP & 0  \cr   0 & I_{n-2} \cr} \end{matrix} \right] \, U^\ast \right) = 
 U \,  \left[ \begin{matrix} {f_P (r) P & 0  \cr  0 & I_{n-2} \cr} \end{matrix} \right] \, U^\ast .
$$
Since $U \,  \left[ \begin{matrix} {0 & 0  \cr   0 & I_{n-2} \cr} \end{matrix} \right] \, U^\ast \in E(n-2, 2)$, by Step \ref{mart} and the continuity of $\phi$ we also see $f_P(0)=0$. 
Thus $f_P : [0,1] \to [0,1]$ is an increasing bijective function.

\begin{step}
We have $f_P(r)=r$ for any $2\times 2$ rank one projection $P$ and real number $r\in [0, 1]$. 
\end{step}
\begin{proof}
We first claim that $f_P = f$ is independent of $P$. Think just of the upper left two by two corners of the matrices treated in the previous paragraphs. We know that for $r, s \in (0,1)$ and
different rank one projections $P$ and $Q$ we have
$$
rP \sim sQ \iff rP + sQ \le I_2.
$$
Let $P$ and $Q$ be any distinct projections of rank one. Find a projection of rank one $R$ such that ${\rm tr}\, (PR) = {\rm tr}\, (QR)$
(the angle between $P$ and $R$ is the same as the angle between $Q$ and $R$). Then Lemma \ref{dirsum} and the above imply 
$$
f_P (r) P + f_R(s) R \le I_2 \iff 
f_Q (r) Q + f_R(s) R \le I_2,
$$
which yields $f_P = f_Q$.

Next, we prove that $f(t) = t$. For every $t \in (1/2, 1)$ we can find rank one projections $P$ and $Q$ with $P\neq Q$ such that
$$
t(P+Q) \le I_2
$$
and 
$$
s(P+Q) \not\le I_2
$$
whenever $s > t$. Hence, by the above argument again, we have 
$$
f(t)(P+Q) \le I_2
$$
and 
$$
f(s)(P+Q) \not\le I_2
$$
whenever $s > t$. 
We obtain $f(t) = t$. 

If $t\in (0, 1/2]$,  we can find $r\in (1/2, 1)$ and rank one projections $P$ and $Q$ with $P\neq Q$ such that
$$
tP+rQ \le I_2
$$
and 
$$
tP+sQ \not\le I_2
$$
whenever $s > r$.
Hence,
$$
f(t)P+rQ \le I_2
$$
and 
$$
f(t)P+sQ \not\le I_2
$$
whenever $s > r$. 
We conclude that $f(t)=t$ for any $t\in [0, 1]$. 
\end{proof}

This implies that for every unitary matrix $U$ and every rank one effect $R\in E_2$ we have
$$
\phi \left( U \,  \left[ \begin{matrix} {R & 0  \cr   0 & I_{n-2} \cr} \end{matrix} \right] \, U^\ast \right) = 
 U \,  \left[ \begin{matrix} {R & 0  \cr  0 & I_{n-2} \cr} \end{matrix} \right] \, U^\ast .
$$
Using \cite[Corollary 2.10]{GeS} we conclude that for every unitary matrix $U$, every $2 \times 2$ nonscalar effect $A$ and every $(n-2) \times (n-2)$ effect $B$ we have either
$$
\phi \left( U \,  \left[ \begin{matrix} {A & 0  \cr   0 & B \cr} \end{matrix} \right] \, U^\ast \right) = 
 U \,  \left[ \begin{matrix} {A & 0  \cr  0 & * \cr} \end{matrix} \right] \, U^\ast ,
$$
or
$$
\phi \left( U \,  \left[ \begin{matrix} {A & 0  \cr   0 & B \cr} \end{matrix} \right] \, U^\ast \right) = 
 U \,  \left[ \begin{matrix} {A^\perp & 0  \cr  0 & * \cr} \end{matrix} \right] \, U^\ast .
$$

Let $A\in E_n\setminus {\rm Sca} (H)$. 
We can find a unitary matrix $U$ and real numbers $0\le  a_1\le  a_2\le \cdots\le  a_n\le  1$ such that $A = U \, {\rm diag} (a_1, a_2, \cdots, a_n) \, U^\ast$. 
By Step \ref{avirio}, we can find real numbers $b_1, b_2, \cdots, b_n$ such that  $\phi(A) = U \, {\rm diag} (b_1, b_2, \cdots, b_n)\,  U^\ast$.  
The above conclusion implies that for any $1\le  j<k\le  n$ with $a_j\neq a_k$, we have either 
$$
a_j=b_j \quad {\rm and }\quad a_k=b_k
$$
or 
$$
a_j=1-b_j \quad {\rm and }\quad a_k=1- b_k.
$$
Using it, we easily see that $\phi(A)\in \{A, A^{\perp}\}$ for any $A\in E_n\setminus{\rm Sca} (H)$. 
Since $\phi$ fixes every element in $E(1, n-1)$, the continuity of $\phi$ leads to our goal $\phi(A)=A$ for any $A\in E_n$.

\end{document}